\newtheorem{definition}{Definition}
\newtheorem{theorem}{Theorem}
\newtheorem{assumption}{Assumption}
\newtheorem{proof}{Proof}
\newtheorem{corollary}{Corollary}
\newtheorem{lemma}{Lemma}
\newcommand{\qed}{\hfill$\Box$}
\begin{document}
\title{Unbiased Estimating Equation on Inverse Divergence and Its Conditions} 

\author{%
  \IEEEauthorblockN{Masahiro Kobayashi}
  \IEEEauthorblockA{Information and Media Center\\
                    Toyohashi University of Technology\\
                    Toyohashi, Aichi, Japan\\
                    Email: kobayashi@imc.tut.ac.jp}
  \and
  \IEEEauthorblockN{Kazuho Watanabe}
  \IEEEauthorblockA{Department of Computer Science and Engineering\\ 
                    Toyohashi University of Technology\\
                    Toyohashi, Aichi, Japan\\
                    Email: wkazuho@cs.tut.ac.jp}
}

\maketitle

\begin{abstract}
This paper focuses on the Bregman divergence defined by the reciprocal function, called the inverse divergence.
For the loss function defined by the monotonically increasing function $f$ and inverse divergence, the conditions for the statistical model and function $f$ under which the estimating equation is unbiased are clarified. 
Specifically, we characterize two types of statistical models, an inverse Gaussian type and a mixture of generalized inverse Gaussian type distributions, to show that the conditions for the function $f$ are different for each model. 
We also define Bregman divergence as a linear sum over the dimensions of the inverse divergence and extend the results to the multi-dimensional case.
\end{abstract}

\thispagestyle{firstpage} 

\section{Introduction}
The maximum likelihood estimation (MLE) is a standard method in parametric estimation although it is vulnerable to outliers. 
In robust statistics, the methods studied to overcome the adverse effects of outliers \cite{robust, huberbook} include M-estimation, which is a well-known robust estimation method. 
In M-estimation, the assumed model in the MLE is changed to another with heavy tails. 
The well-known minimum divergence method changes the Kullback--Leibler divergence corresponding to the MLE to a more robust divergence for estimation \cite{basubook, pardobook}. 
The M-estimation and minimum divergence methods obtain estimators as solutions to estimating equations, which are mainly of two types: non-normalized and normalized.
Bregman divergence and its special cases \cite{beta-div, bexp-div, Roy2019, Singh2021, u-boost} correspond to the non-normalized estimating equation, whereas $\gamma$-divergence \cite{gamma-div} corresponds to the normalized estimating equation. 
Notably, the normalized estimating equation has the distinctive feature of bringing the latent bias close to zero, even in cases heavily contaminated by outliers \cite{fujisawa2013}. 
However, the analytically intractable integrals involved in these estimating equations limit the choice of models and weight functions that can be used. 
Recent studies on the minimization of divergence have adopted the stochastic optimization framework to avoid these intractable integrals \cite{okuno2023minimizing, pscd2021}.

The $f$-separable distortion measure \cite{f-separable} was proposed as an extension of the average and maximum distortions that are commonly used in information theory. 
Recently, we proposed a parameter estimation method that minimizes the $f$-separable distortion using Bregman divergence as a base distortion measure \cite{mypaper3}.
This method is a type of M-estimation method.
The property of the estimator is determined by the shape of the monotonically increasing function $f$, exhibiting robustness against outliers when $f$ is concave. 
However, extending the loss function using function $f$ does not always guarantee an unbiased estimating equation, which is a necessary condition for the consistency of the estimator. 
To satisfy the unbiased estimating equation, a bias correction term involving an analytically intractable integral is required. 
Consequently, to avoid this problem, combinations of the statistical model, Bregman divergence, and function $f$ that satisfy the unbiased estimating equation without a bias correction term have been investigated \cite{mypaper5}.
It has been suggested that the necessity of a bias correction term depends on the type of Bregman divergence utilized in the estimation \cite{mypaper5}. 
When a bias correction term is not required, the Bregman divergence describes the corresponding statistical model used for estimation, and the available functions $f$ are characterized by the boundedness of a specific simple integral. 
In rare cases, the bias correction term disappears; however, the combinations generating this condition are unknown.

This paper focuses on the Bregman divergence defined by the reciprocal function, called the inverse divergence.
We clarify combinations of the statistical model and function $f$ that eliminate the bias correction term when using inverse divergence for estimation.
Furthermore, we extend the result to a multi-dimensional  case by expressing the inverse divergence as a linear sum in multiple dimensions.
\renewcommand{\arraystretch}{1.5}
\begin{table*}[h]
    \centering
    \caption{Combination of Bregman divergence, statistical model, and function $f$ when the bias correction term vanishes}
    \small
    \begin{tabular}{c|c|c|c|c}
         Divergence & Support & Model & PDF & Bounded condition \\
         \hline\hline
         \begin{tabular}{c}
            Mahalanobis \\
            $d_{\rm M}^{\bm{A}}(\bm{x},\bm{\theta})=(\bm{x}-\bm{\theta})^{\rm T}\bm{A}(\bm{x}-\bm{\theta})$ 
         \end{tabular}
          & $\mathbb{R}^d$ & Elliptical\cite{elliptical_book} & $\frac{|\bm{A}|^\frac{1}{2}}{C}g\left(d_{\rm M}^{\bm{A}}(\bm{x},\bm{\theta})\right)$ & $\int_0^\infty g(t)f'(t)t^\frac{d-1}{2}dt$ \\ \hline
         \begin{tabular}{c}
            1D Bregman$^1$\\
            $d_\phi(x,\theta)$ 
         \end{tabular}
        & $(a,b)$ & Continuous Bregman\cite{mypaper5} & $\frac{1}{C_\phi(\theta)}\frac{\phi'(x)-\phi'(\theta)}{x-\theta}g\left(d_\phi(x,\theta)\right)$ & $\int_0^\zeta g(t)f'(t)dt$ \\ \hline
        \begin{tabular}{c}
            Inverse  \\
            \eqref{eq:ig_div} 
        \end{tabular}
        & $\mathbb{R}_{++}$ & \begin{tabular}{c}
             Inverse Gaussian type\\
             (IGT) \cite{igt}
        \end{tabular} & \eqref{eq:igt} & \eqref{eq:bregman_ig_cond} [This paper]\\
        \begin{tabular}{c}
            Multivariate inverse  \\
            \eqref{eq:multivariate_ig_div} 
        \end{tabular}
        & $\mathbb{R}_{++}^d$ & \begin{tabular}{c}
             Multivariate IGT\\
             (MIGT) [This paper]
        \end{tabular} & \eqref{eq:multi_igt} & \eqref{eq:multi_bregman_ig_cond} [This paper]\\
        \multicolumn{5}{l}{$^{\mathrm{1}}$The one-dimensional Bregman divergence that satisfies the following condition:} \\
        \multicolumn{5}{l}{$\forall \theta \in \Theta, \; \lim_{x\to a} d_\phi(x,\theta) = \lim_{x\to b} d_\phi(x, \theta) = \zeta \in \mathbb{R}_{++} \cup \{\infty\}$, where $\mathbb{R}_{++}$ is the set of positive real numbers.}
    \end{tabular}
    \label{tab:vanished_list}
    \vspace{-10.5pt}
\end{table*}
\renewcommand{\arraystretch}{1}
\renewcommand{\arraystretch}{1.5}
\begin{table*}[h]
    \centering
    \normalsize
    \caption{Special case of continuous Bregman distribution and corresponding Bregman divergence}
    \small
    \begin{tabular}{c|c|c|c|c|c}
         $\phi(x)$& Divergence & Support & Generator & Model & PDF \\
         \hline\hline
         \multirow{2}{*}{\large$\frac{x^2}{\sigma^2}$}\normalsize&\multirow{2}{*}{\begin{tabular}{c}
            Squared \\
            $\frac{(x-\theta)^2}{\sigma^2}$\normalsize
         \end{tabular} }
         & \multirow{2}{*}{$\mathbb{R}$} & $g(t)$ & 1D elliptical\cite{elliptical_book}& $\frac{1}{C\sigma}g(\frac{(x-\theta)^2}{\sigma^2})$   \\
         & & & $\exp(-\frac{t}{2})$ & Gaussian & $\frac{1}{\sqrt{2\pi\sigma^2}}\exp(-\frac{(x-\theta)^2}{2\sigma^2})$  \\ \hline
        \multirow{2}{*}{$-k\log x$}&\multirow{2}{*}{\begin{tabular}{c}
            Itakura--Saito (IS) \\
            $d_{\rm IS}^k(x,\theta) = k(\frac{x}{\theta}-\log\frac{x}{\theta}-1)$ 
         \end{tabular} }
         & \multirow{2}{*}{$\mathbb{R}_{++}$} & $g(t)$ & IS\cite{mypaper5}& $\frac{1}{C(k)}\frac{1}{x}g(d_{\rm IS}^k(x, \theta))$   \\ 
         & & & $\exp(-t)$ & Gamma & $\left(\frac{k}{\theta}\right)^k \frac{x^{k-1}}{\Gamma(k)}\exp(-\frac{k}{\theta}x)$   \\ \hline
         \multirow{2}{*}{\large$\frac{\lambda}{x}$}\normalsize&\multirow{2}{*}{\begin{tabular}{c}
            Inverse \\
            \eqref{eq:ig_div} 
         \end{tabular} }
         & \multirow{2}{*}{$\mathbb{R}_{++}$} & $g(t)$ & GIGT mixture [This paper]& \eqref{eq:mixture_cbd}  \\ 
         & & & $\exp(-\frac{t}{2})$ & GIG mixture [This paper] & \eqref{eq:gig_mixture}\\
         \multicolumn{6}{l}{Abbreviations: GIGT, Generalized inverse Gaussian type; GIG, Generalized inverse Gaussian.}
    \end{tabular}
    \label{tab:cbd_list}
    \vspace{-10.5pt}
\end{table*}
\renewcommand{\arraystretch}{1}

\section{$f$-separable Bregman distortion measures}
\subsection{Problem setting and loss function}
In estimating the parameter $\bm{\theta}\in\bm{\Theta}\subseteq\mathbb{R}^d$ of a statistical model $p(\bm{x}|\bm{\theta})$ based on the given data $\bm{x}^n=\{\bm{x}_1,\cdots,\bm{x}_n\}, \; \bm{x}_i=(x_i^{(1)},\cdots,x_i^{(d)})^{\rm T}\in\bm{\chi}\subseteq\mathbb{R}^d$, we assume that the true distribution can be realized by $p(\bm{x}|\bm{\theta}^*)$. 
When the expected value exists for the statistical model, we assume that $\bm{\theta}=\mathbb{E}[\bm{X}]=\int \bm{x}p(\bm{x}|\bm{\theta})d\bm{x}$.
The loss function is defined as
\begin{align}
    L(\bm{\theta}) = \frac{1}{n}\sum_{i=1}^n f(d_\phi(\bm{x}_i, \bm{\theta})), \label{eq:loss}
\end{align}
using differentiable and continuous monotonically increasing function $f:\mathbb{R}_+\to\mathbb{R}$ and the Bregman divergence $d_\phi(\bm{x},\bm{\theta}):\bm{\chi}\times\bm{\Theta}\to\mathbb{R}_+$, where $\mathbb{R}_+$ is the set of nonnegative real numbers \cite{mypaper3, mypaper5, f-separable}.
The Bregman divergence is defined as
\begin{align*}
    d_\phi(\bm{x}, \bm{\theta}) = \phi(\bm{x}) - \phi(\bm{\theta}) - \langle \bm{\nabla}\phi(\bm{\theta}), \bm{x}-\bm{\theta} \rangle,
\end{align*}
where $\phi:\bm{\chi}\to\mathbb{R}$  is a differentiable strictly convex function with gradient vector $\bm{\nabla}\phi$ and $\langle \cdot, \cdot \rangle$ denotes the inner product. 
The property of the estimator depends on the shape of the function $f$. 
When the function $f$ is concave, the estimator is robust against outliers. 
Furthermore, when the function $f$ is linear, the estimation problem reduces to the MLE for the expected value parameter in a regular exponential family:
\begin{align}   
    p(\bm{x}|\bm{\theta}) = r_\phi(\bm{x})\exp(-d_\phi(\bm{x}, \bm{\theta})), \label{eq:reg_exp}
\end{align}
where the strictly convex function $\phi$ uniquely determines $r_\phi(\bm{x})$ \cite{clustering-bregman}.

\subsection{Conditions for unbiased estimating equation}
The estimator $\hat{\bm{\theta}}$ is the solution to the stationary point, obtained by differentiating the loss function \eqref{eq:loss} with respect to the parameter $\bm{\theta}$ and setting it to $\bm{0}$, in solving the estimating equation:
\begin{align}
    \frac{1}{n}\sum_{i=1}^n f'(d_\phi(\bm{x}_i, \bm{\theta}))(\bm{x}_i - \bm{\theta}) = \bm{0}, \label{eq:est_eq}
\end{align}
where $f'$ is the derivative of $f$. 
However, this estimating equation is generally biased. 
This is because the left-hand side of the estimating equation \eqref{eq:est_eq} does not necessarily converge asymptotically to $\bm{0}$. 
In the limiting value, bias should be pre-corrected to satisfy the unbiased estimating equation. 
Two main types of unbiased estimating equations are known: non-normalized \eqref{eq:fueq} and normalized \eqref{eq:fneq}. 
The latter is particularly noted for its ability to reduce latent biases to $\bm{0}$, even in situations with a high proportion of outliers:
\small
\begin{align}
    \frac{1}{n}\sum_{i=1}^n f'(d_\phi(\bm{x}_i, \bm{\theta}))(\bm{x}_i - \bm{\theta}) = \mathbb{E}_{p(\bm{x}|\bm{\theta})}\left[f'(d_\phi(\bm{X}, \bm{\theta}))(\bm{X}-\bm{\theta})\right], \label{eq:fueq}\\
    \frac{\sum_{i=1}^n f'(d_\phi(\bm{x}_i, \bm{\theta}))(\bm{x}_i-\bm{\theta})}{\sum_{j=1}^n f'(d_\phi(\bm{x}_j, \bm{\theta}))} = \frac{\mathbb{E}_{p(\bm{x}|\bm{\theta})}\left[f'(d_\phi(\bm{X}, \bm{\theta}))(\bm{X}-\bm{\theta})\right]}{\mathbb{E}_{p(\bm{x}|\bm{\theta})}\left[f'(d_\phi(\bm{X}, \bm{\theta}))\right]}, \label{eq:fneq}
\end{align}
\normalsize
\cite{fujisawa2013, mypaper5}.
The bias correction term on the right hand side of each estimating equation is generally difficult to integrate analytically. 
However, if the following condition holds, the bias correction terms on the right sides of both the non-normalized \eqref{eq:fueq} and normalized \eqref{eq:fneq} estimating equations vanish, resulting in the same problem expressed by the estimating equation \eqref{eq:est_eq}:
\begin{align}
    \mathbb{E}_{p(\bm{x}|\bm{\theta})}\left[f'(d_\phi(\bm{X}, \bm{\theta}))(\bm{X}-\bm{\theta})\right] = \bm{0}. \label{eq:bias0}
\end{align}
In other words, if \eqref{eq:bias0} is satisfied, \eqref{eq:est_eq} is the unbiased estimating equation not requiring a bias correction term.
Additionally, as it can be interpreted as the normalized estimating equation \eqref{eq:fneq}, the latent bias can be approximately reduced to $\bm{0}$, even in situations with a high proportion of outliers.

\subsection{Existing results}
Table \ref{tab:vanished_list} shows the combinations of Bregman divergence, statistical model, and function $f$ for which the bias correction term vanishes, and \eqref{eq:bias0} is satisfied. 
Previous research has clarified the conditions for the statistical model, function $f$ corresponding to the Mahalanobis distance, and one-dimensional Bregman divergence\footnotemark[1] \cite{mypaper5}. 
The Mahalanobis distance and one-dimensional Bregman divergence correspond to elliptical and continuous Bregman distributions, respectively.
Elliptical distributions are a family of distributions defined by the generating function and the Mahalanobis distance \cite{elliptical_book}. 
They include the well-known Gaussian, Laplace, and  $t$- distributions as special cases.
Whereas, continuous Bregman distributions are a distribution family defined by the generating function $g$ and a strictly convex function $\phi$. 
The special cases of this family are the one-dimensional elliptical distribution and the Itakura--Saito (IS) distribution, which generalizes the gamma distribution (Table \ref{tab:cbd_list}).

The loss function \eqref{eq:loss} is an extension of the negative log-likelihood function under the regular exponential family \eqref{eq:reg_exp}, enhanced by the function $f$. 
Therefore, it is preferable that statistical models that satisfy the unbiased estimating equation \eqref{eq:bias0} correspond to the regular exponential family. 
However, apart from the Gaussian and gamma distributions, no special cases of the regular exponential family have been identified as satisfying the unbiased estimating equation \eqref{eq:bias0}. 
The following sections clarify that for estimations using inverse divergence, the statistical model satisfying the unbiased estimating equation \eqref{eq:bias0} corresponds to the inverse Gaussian distribution (and its generalized distribution family), which is a special case of the regular exponential family.
\section{Models and conditions under inverse divergence}
This section discusses the Bregman divergence specified by the strictly convex function $\phi(x)=\lambda/x,\; (x>0, \lambda>0)$ which is called the inverse divergence \cite[p.102]{nmf_book}, where $\lambda$ is a nuisance parameter. 
The inverse divergence, defined as
\begin{align}
    d_{\rm Inv}^\lambda(x, \theta) \triangleq \frac{\lambda(x-\theta)^2}{\theta^2x}, \label{eq:ig_div}
\end{align}
is known to correspond to the inverse Gaussian distribution \cite{clustering-bregman}, \cite[Chapter 2]{Wuthrich2023} and is also a particular case of the $\beta$-divergence \cite{Cichocki2010}. 
In the following subsections, we discuss the two types of statistical models satisfying the unbiased estimating equation \eqref{eq:bias0} when using inverse divergence for estimation and show that the conditions for the function $f$ differ for each model.

\subsection{Inverse Gaussian type (IGT) distribution}
\begin{definition}[IGT distribution \cite{igt}]
    For $x\in\mathbb{R}_{++}$, the parameters $\theta\in\Theta=\mathbb{R}_{++}$ and $\lambda\in\mathbb{R}_{++}$, and a nonnegative generating function $g:\mathbb{R}_+\to\mathbb{R}_+$, the IGT distribution is defined as follows:
    \begin{align}
        p(x|\theta, \lambda) &= \frac{1}{C_{\rm IGT}}\sqrt{\frac{\lambda}{x^3}}g(d_{\rm Inv}^\lambda(x, \theta)), \label{eq:igt}\\
        C_{\rm IGT} &= \int_0^\infty \frac{1}{\sqrt{t}}g(t)dt, \label{eq:const_igt}
    \end{align}
    if the normalization constant $C_{\rm IGT}$ exists.
\end{definition}
Here, $\mathbb{R}_{++}=\mathbb{R}_{+} \setminus \{0\}$.
Note that the normalization constant $C_{\rm IGT}$ is the same as that for the one-dimensional elliptical distribution \cite{elliptical_book, igt}.
When the generating function is $g(t)=\exp(- t/2)$, the IGT distribution \eqref{eq:igt} reduces to the inverse Gaussian distribution \cite{chhikara1988inverse, seshadri1994inverse, seshadri1999inverse} as follows:
\begin{align*}
    p(x|\theta, \lambda) = \sqrt{\frac{\lambda}{2\pi x^3}}\exp\left(-\frac{\lambda (x-\theta)^2}{2\theta^2x}\right).
\end{align*}
The expected value of the IGT distribution, $\mathbb{E}[X]=\theta$, is independent of the generating function $g$ \cite{igt}. 
This fact can also be derived from the following Corollary \ref{corollary:igt_expect}, which is derived below.
\begin{assumption}\label{assp:igt}
    There exists an IGT distribution \eqref{eq:igt} corresponding to the nonnegative generating function $g$, i.e., $C_{\rm IGT}<\infty$.
\end{assumption}
\begin{theorem} \label{theorem:igt}
    Under Assumption \ref{assp:igt}, the estimating equation without a bias correction term equivalently, \eqref{eq:bias0} holds if and only if
    \begin{align}
        \int_0^\infty g(t)f'(t)\frac{1}{\sqrt{t+1}}dt<\infty \label{eq:bregman_ig_cond}
    \end{align}
    holds for the combination of the function $f$ and the statistical model \eqref{eq:igt}.
\end{theorem}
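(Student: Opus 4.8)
The plan is to evaluate the left-hand side of \eqref{eq:bias0} as an explicit one-dimensional integral and to show that it vanishes exactly when \eqref{eq:bregman_ig_cond} guarantees absolute integrability. Writing $d=d_{\rm Inv}^\lambda(x,\theta)$, the quantity to analyze is
\[
\mathbb{E}_{p(x|\theta,\lambda)}\!\left[f'(d)(X-\theta)\right]
=\frac{1}{C_{\rm IGT}}\int_0^\infty f'(d)\,(x-\theta)\sqrt{\frac{\lambda}{x^3}}\,g(d)\,dx .
\]
Since $(x-\theta)$ changes sign at $x=\theta$, the first step is to split this integral at $x=\theta$ into a part over $(0,\theta)$ and a part over $(\theta,\infty)$, and on each part to change the variable from $x$ to $t=d_{\rm Inv}^\lambda(x,\theta)$.

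The key structural fact I would exploit is that for each fixed $t>0$ the equation $d_{\rm Inv}^\lambda(x,\theta)=t$ is the quadratic $\lambda x^2-(2\lambda\theta+t\theta^2)x+\lambda\theta^2=0$, whose two roots $x_-<\theta<x_+$ satisfy, by Vieta's formulas, $x_+x_-=\theta^2$. Direct differentiation gives the Jacobian $dt/dx=\lambda(x^2-\theta^2)/(\theta^2x^2)$, so the troublesome factor $(x-\theta)$ cancels against the $(x-\theta)$ hidden in $x^2-\theta^2$ when $dx$ is substituted. After this cancellation each branch becomes an integral in $t$ whose integrand is proportional to $f'(t)\,g(t)\,\sqrt{x}/(x+\theta)$ evaluated at the respective root. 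The crucial computation is then that $\sqrt{x}/(x+\theta)$ takes the \emph{same} value at both roots: using $x^2+\theta^2=(2\theta+t\theta^2/\lambda)x$ from the quadratic one gets $(x+\theta)^2=(4\theta+t\theta^2/\lambda)x$, hence
\[
\frac{\sqrt{x}}{x+\theta}=\sqrt{\frac{\lambda}{\theta(4\lambda+t\theta)}}
\]
independently of which root is chosen (equivalently, this is immediate from $x_+x_-=\theta^2$).

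With this identity the two branch integrals are identical in magnitude but opposite in sign, because the substitution reverses orientation on $(0,\theta)$, where $dt/dx<0$. Concretely, both the positive part (from $x>\theta$) and the absolute value of the negative part (from $x<\theta$) equal
\[
\frac{\theta^{3/2}}{C_{\rm IGT}}\int_0^\infty \frac{f'(t)\,g(t)}{\sqrt{4\lambda+t\theta}}\,dt ,
\]
so the expectation is their difference, namely $0$, precisely when this common integral is finite. Since $f$ is monotonically increasing, $f'\ge 0$, and $g\ge 0$, the integrand is nonnegative; hence finiteness of this integral is exactly the statement that the expectation is a well-defined (absolutely convergent) Lebesgue integral equal to $0$, whereas divergence makes both parts infinite and \eqref{eq:bias0} fail. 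This is the place where the ``if and only if'' must be argued through integrability rather than by formal cancellation.

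It remains to identify finiteness of $\int_0^\infty f'(t)g(t)/\sqrt{4\lambda+t\theta}\,dt$ with condition \eqref{eq:bregman_ig_cond}. The final step is the elementary observation that $1/\sqrt{4\lambda+t\theta}$ and $1/\sqrt{t+1}$ are comparable on $[0,\infty)$: their ratio is continuous and strictly positive, tends to $1/\sqrt{\theta}$ as $t\to\infty$, and equals $1/(2\sqrt{\lambda})$ at $t=0$, so it is bounded above and below by positive constants. Thus the two integrals converge or diverge together, and \eqref{eq:bias0} holds if and only if \eqref{eq:bregman_ig_cond} holds. I expect the main obstacle to be the root-symmetry step together with the integrability bookkeeping that turns the pointwise cancellation into a genuine equivalence; once the relation $x_+x_-=\theta^2$ is in hand, the remaining estimates are routine.
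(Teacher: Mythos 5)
Your proposal is correct and follows essentially the same route as the paper's proof: splitting the expectation at $x=\theta$, substituting $t=d_{\rm Inv}^\lambda(x,\theta)$, exploiting that $\sqrt{x}/(x+\theta)$ takes the same value at both roots of the quadratic (the paper verifies this by plugging in the explicit roots, you via Vieta's relation $x_+x_-=\theta^2$, which is the same fact), and then arguing necessity through nonnegativity of the integrand. The only noteworthy difference is the last step: where the paper removes the parameter $a=4\lambda/\theta$ by a monotonicity-and-continuity argument on $I(a)$, you use a direct two-sided comparison of $1/\sqrt{4\lambda+t\theta}$ with $1/\sqrt{t+1}$, which is slightly cleaner and sidesteps the delicate claim that $I(a)$ is continuous where its finiteness is still in question.
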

\begin{proof}
    Substituting the inverse divergence \eqref{eq:ig_div} and IGT distribution \eqref{eq:igt} into the left-hand side of \eqref{eq:bias0}, we have
    \begin{align}
    &\mathbb{E}_{p(x|\theta, \lambda)}\left[f'(d_{\rm Inv}^\lambda(X, \theta))(X-\theta)\right] \nonumber \\
    =&\int_0^\infty \frac{1}{C_{\rm IGT}}\sqrt{\frac{\lambda}{x^3}}g(d_{\rm Inv}^\lambda(x, \theta))f'(d_{\rm Inv}^\lambda(x, \theta))(x-\theta)dx \nonumber \\
    \propto& \int_0^\theta \frac{1}{\sqrt{x^3}}g(d_{\rm Inv}^\lambda(x, \theta))f'(d_{\rm Inv}^\lambda(x, \theta))(x-\theta)dx \nonumber \\
    &+ \int_\theta^\infty \frac{1}{\sqrt{x^3}}g(d_{\rm Inv}^\lambda(x, \theta))f'(d_{\rm Inv}^\lambda(x, \theta))(x-\theta)dx \label{eq:proof_ig1}\\
    =&\frac{\theta}{\sqrt{\lambda}} \int_0^\infty g(t)f'(t) \left[\frac{1}{\sqrt{t+4\frac{\lambda}{\theta}}}-\frac{1}{\sqrt{t+4\frac{\lambda}{\theta}}}\right]dt = 0. \nonumber
\end{align}
    We used integration by substitution, $t=d_{\rm Inv}^\lambda(x, \theta)$. 
    The details of the substitution integration from \eqref{eq:proof_ig1} to the next line are provided in Appendix \ref{sec:appendix2}.
    Therefore, if the following integral exists for any $a>0$, the unbiased estimating equation \eqref{eq:bias0} holds without a bias correction term:
    \begin{align}
        I(a) \triangleq \int_0^\infty g(t)f'(t)\frac{1}{\sqrt{t + a}}dt<\infty. \label{eq:I-theta}
    \end{align}
    Conversely, the above discussion also shows that $\mathbb{E}_{p(x|\theta, \lambda)}\left[\left|f'(d_{\rm Inv}^\lambda(X,\theta))(X-\theta)\right|\right]\propto 2\theta\sqrt{\lambda}^{-1}I(4\lambda\theta^{-1})$.
    In other words, $I(a)<\infty$ is also a necessary condition.

    However, since this integral includes the parameter $a=4\lambda/\theta>0$, rewriting it in a form that does not depend on $a$ is desirable. 
    The integrand of \eqref{eq:I-theta} is a strictly monotonically decreasing and continuous function with respect to $a$. 
    Therefore, the $I(a)$ obtained through integration is also a strictly monotonically decreasing and continuous function with respect to $a$.
    We assume that $I(a^\dagger)<\infty$ for $\exists a^\dagger \in (0,\infty)$.
    The strict monotonic decrease of $I(a)$ ensures that $\forall \varepsilon>0, I(a^\dagger +\varepsilon)<\infty$ holds. 
    Furthermore, owing to the continuity of $I(a)$, $\forall \varepsilon\in(0,a^\dagger), I(a^\dagger - \varepsilon)<\infty$ holds. 
    Therefore, if $I(a)$ is bounded at some point $a^\dagger\in(0,\infty)$, it is bounded for any $a\in(0,\infty)$.
    Thus, we set $a=1$.
    Based on the above discussion, \eqref{eq:bregman_ig_cond} is a necessary and sufficient condition for the unbiased estimating equation to hold without the bias correction term.
    \qed
\end{proof}

The condition $\int_0^\infty g(t)\sqrt{t+1}^{-1}dt<\infty$ for the existence of the expected value of the IGT distribution is obtained from Theorem \ref{theorem:igt} by substituting $f'(t)=1$.
As this condition is the lower bound of the normalization constant \eqref{eq:const_igt} of the IGT distribution, the following corollary is obtained.
\begin{corollary}\label{corollary:igt_expect}
    The expected value of the IGT distribution \eqref{eq:igt} always exists, independent of the generating function $g$, and satisfies $\mathbb{E}[X]=\theta$.
\end{corollary}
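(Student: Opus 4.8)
The plan is to obtain this corollary as an immediate consequence of Theorem \ref{theorem:igt} specialized to the linear choice $f(t)=t$, i.e.\ $f'(t)\equiv 1$. With this choice the unbiasedness identity \eqref{eq:bias0} reads $\mathbb{E}_{p(x|\theta,\lambda)}[X-\theta]=0$, so once I can certify that the mean is finite, the equality $\mathbb{E}[X]=\theta$ drops out directly. The only two things left to check are (i) that the hypothesis \eqref{eq:bregman_ig_cond} of Theorem \ref{theorem:igt} holds automatically when $f'\equiv 1$, and (ii) that this hypothesis does guarantee finiteness of $\mathbb{E}[X]$. Importantly, I want to feed the linear $f$ into the theorem rather than assume integrability at the outset, so the logical order is to first secure the hypothesis and only then read off the conclusion.

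First I would note that, because $\theta$ is finite, $\mathbb{E}[X]$ exists precisely when $\mathbb{E}[|X-\theta|]<\infty$. The necessity computation already carried out in the proof of Theorem \ref{theorem:igt}, evaluated at $f'\equiv 1$, gives
\begin{align*}
    \mathbb{E}_{p(x|\theta,\lambda)}\left[|X-\theta|\right] \propto \frac{2\theta}{\sqrt{\lambda}}\, I\!\left(\frac{4\lambda}{\theta}\right),
\end{align*}
where $I(a)$ is the integral \eqref{eq:I-theta} at $f'\equiv 1$, namely $I(a)=\int_0^\infty g(t)(t+a)^{-1/2}\,dt$. Hence finiteness of the mean is equivalent to $I(4\lambda/\theta)<\infty$, and by the monotonicity-and-continuity reduction inside Theorem \ref{theorem:igt} this is equivalent to the normalized condition \eqref{eq:bregman_ig_cond}, i.e.\ to $I(1)=\int_0^\infty g(t)(t+1)^{-1/2}\,dt<\infty$.

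The key step is then a one-line domination bound, which is where the paper's remark that this quantity ``lower-bounds the normalization constant'' enters. Since $(t+a)^{-1/2}\le t^{-1/2}$ for all $a>0,\,t>0$, I have
\begin{align*}
    I(a)=\int_0^\infty g(t)\frac{1}{\sqrt{t+a}}\,dt \;\le\; \int_0^\infty \frac{g(t)}{\sqrt{t}}\,dt = C_{\rm IGT}.
\end{align*}
Under Assumption \ref{assp:igt} the IGT density is normalizable, so $C_{\rm IGT}<\infty$; therefore $I(a)<\infty$ for every $a>0$, and in particular \eqref{eq:bregman_ig_cond} holds unconditionally for the linear $f$, irrespective of the generating function $g$. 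This simultaneously establishes (ii) and (i): the mean is finite whenever the density integrates to one, and the theorem's hypothesis is met.

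With finiteness in hand, the ``if'' direction of Theorem \ref{theorem:igt} applies to the linear $f$ and yields \eqref{eq:bias0} in the form $\mathbb{E}_{p(x|\theta,\lambda)}[X-\theta]=0$, i.e.\ $\mathbb{E}[X]=\theta$, completing the argument. I do not expect a genuine obstacle, as the corollary is essentially a repackaging of the already-proved theorem; the only subtlety is the order of reasoning—using the $C_{\rm IGT}$ bound to license the conclusion \eqref{eq:bias0} before extracting $\mathbb{E}[X]=\theta$, rather than presupposing the mean exists.
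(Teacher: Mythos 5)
Your proof is correct and takes essentially the same route as the paper: specialize Theorem \ref{theorem:igt} to $f'(t)\equiv 1$ and observe that the resulting condition $\int_0^\infty g(t)(t+1)^{-1/2}dt<\infty$ is dominated by the normalization constant $C_{\rm IGT}$ in \eqref{eq:const_igt}, which is finite by Assumption \ref{assp:igt}. The care you take with the logical ordering (securing finiteness of $\mathbb{E}[|X-\theta|]$ via the necessity part of the theorem's proof before reading off \eqref{eq:bias0}) only makes explicit what the paper leaves implicit.
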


In the above discussion, we showed that for estimation using the inverse divergence \eqref{eq:ig_div}, the statistical model satisfying the unbiased estimating equation \eqref{eq:bias0} corresponds to the IGT distribution.
The inverse Gaussian distribution, a special case of the IGT distribution, is also a special case of the regular exponential family of distributions.
On the other hand, continuous Bregman distributions correspond to the one-dimensional Bregman divergence. 
Thus, the inverse divergence also corresponds to continuous Bregman distributions generated by the reciprocal function. 
In the following, we show that generalized inverse Gaussian type (GIGT) mixture distributions, which are special cases of continuous Bregman distributions, can be generated and correspond to the inverse divergence.

\subsection{Generalized IGT (GIGT) mixture distribution}
\begin{definition}[GIGT distribution]
    For $x\in\mathbb{R}_{++}$, the parameters $\theta\in\Theta=\mathbb{R}_{++}$, $\lambda\in\mathbb{R}_{++}$, $\nu\in\mathbb{R}$, and a nonnegative generating function $g:\mathbb{R}_+\to\mathbb{R}_+$, the GIGT distribution is defined as follows:
    \begin{align}
        q(x|\theta, \lambda, \nu) &= \frac{1}{C_{\rm GIGT}(\theta, \lambda, \nu)}x^{\nu-1}g(d_{\rm Inv}^\lambda (x,\theta)), \label{eq:gigt} \\
        C_{\rm GIGT}(\theta, \lambda, \nu) &= \int_0^\infty t^{\nu-1}g(d_{\rm Inv}^\lambda (t,\theta))dt, \nonumber
    \end{align}
    if the normalization constant $C_{\rm GIGT}$ exists.
\end{definition}
When $\nu=-1/2$, the GIGT distribution \eqref{eq:gigt} reduces to the IGT distribution \eqref{eq:igt}.
In continuous Bregman distributions, when the strictly convex function is set as $\phi(x)=\lambda/x, \; (x>0, \lambda>0)$, the corresponding statistical model is defined by a two-component mixture of GIGT distribution \eqref{eq:gigt} as
\begin{align}
    &p(x|\theta, \lambda) = w q(x|\theta, \lambda, 0) + (1-w)q(x|\theta, \lambda, -1), \label{eq:mixture_cbd}\\
    &w = \frac{C_{\rm GIGT}(\theta, \lambda, 0)}{C_{\rm GIGT}(\theta, \lambda, 0)+\theta C_{\rm GIGT}(\theta, \lambda, -1)}.
\end{align}
Specifically, when the generating function is $g(t)=\exp(-t/2)$, the GIGT distribution \eqref{eq:gigt} reduces to the generalized inverse Gaussian (GIG) distribution \eqref{eq:gig} \cite[p.6]{jorgsen1982}, and the GIGT mixture distribution \eqref{eq:mixture_cbd} becomes the GIG mixture distribution \eqref{eq:gig_mixture}:
\begin{align}
    &p(x|\theta, \lambda) = w p(x|\alpha, \theta, 0) + (1-w) p(x|\alpha, \theta, -1), \label{eq:gig_mixture} \\
    &p(x|\alpha, \eta, \nu) = \frac{\eta^{-\nu}x^{\nu-1}}{2K_\nu(\alpha)}\exp\left(-\frac{\alpha}{2}\left(\frac{x}{\eta}+\frac{\eta}{x}\right)\right), \label{eq:gig} \\
    &w = \frac{K_0\left(\alpha\right)}{K_0\left(\alpha\right)+K_{-1}\left(\alpha\right)}, \; \alpha = \frac{\lambda}{\theta}, \nonumber
\end{align}
where $K_\nu(\cdot)$ represents the modified Bessel function of the third kind with index $\lambda$.

The combination of the one-dimensional Bregman divergence and continuous Bregman distribution has been shown to satisfy the unbiased estimating equation without a bias correction term. 
Therefore, the following corollary can be obtained from \cite[Theorem 3]{mypaper5}.
\begin{assumption}\label{assp:gigtm}
    There exists a GIGT mixture distribution \eqref{eq:mixture_cbd} corresponding to the nonnegative generating function $g$, i.e., $C_{\rm GIGT}(\theta, \lambda, 0)<\infty$ and $C_{\rm GIGT}(\theta, \lambda, -1)<\infty$.
\end{assumption}
\begin{corollary}
    Under Assumption \ref{assp:gigtm}, the estimating equation without a bias correction term equivalently, \eqref{eq:bias0} holds if and only if
    \begin{align}
        \int_0^\infty g(t)f'(t)dt < \infty \label{eq:cbd_cond}
    \end{align}
    holds for the combination of the function $f$ and the statistical model \eqref{eq:mixture_cbd}.
\end{corollary}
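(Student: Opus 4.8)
The plan is to recognize this statement as a direct specialization of the general one-dimensional Bregman result \cite[Theorem 3]{mypaper5}, summarized in the ``1D Bregman'' row of Table~\ref{tab:vanished_list}, to the particular convex generator $\phi(x)=\lambda/x$. That theorem asserts that, for a continuous Bregman distribution whose density is proportional to $\frac{\phi'(x)-\phi'(\theta)}{x-\theta}g(d_\phi(x,\theta))$ on a support $(a,b)$, condition \eqref{eq:bias0} holds if and only if $\int_0^\zeta g(t)f'(t)dt<\infty$, where $\zeta$ is the common boundary limit $\lim_{x\to a}d_\phi(x,\theta)=\lim_{x\to b}d_\phi(x,\theta)$. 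Two facts must therefore be established: that the continuous Bregman distribution generated by $\phi(x)=\lambda/x$ coincides with the GIGT mixture \eqref{eq:mixture_cbd}, and that the relevant boundary limit is $\zeta=\infty$.

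First I would compute the density factor for $\phi(x)=\lambda/x$. Since $\phi'(x)=-\lambda/x^2$, a short calculation gives
\begin{align}
    \frac{\phi'(x)-\phi'(\theta)}{x-\theta} = \frac{\lambda(x+\theta)}{\theta^2 x^2} \propto \frac{1}{x}+\frac{\theta}{x^2}. \nonumber
\end{align}
Multiplying by $g(d_{\rm Inv}^\lambda(x,\theta))$ and comparing with the GIGT density \eqref{eq:gigt} for indices $\nu=0$ (which supplies the $x^{-1}g$ term) and $\nu=-1$ (which supplies the $x^{-2}g$ term), I would identify the continuous Bregman density, up to normalization, as $C_{\rm GIGT}(\theta,\lambda,0)\,q(x|\theta,\lambda,0)+\theta\,C_{\rm GIGT}(\theta,\lambda,-1)\,q(x|\theta,\lambda,-1)$. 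Renormalizing reproduces exactly the mixture \eqref{eq:mixture_cbd} with its stated weight $w$, and simultaneously shows that finiteness of the Bregman normalization constant is equivalent to the two conditions in Assumption~\ref{assp:gigtm}.

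Next I would verify the footnote hypothesis of Table~\ref{tab:vanished_list} for the support $(a,b)=(0,\infty)$. Using $d_{\rm Inv}^\lambda(x,\theta)=\lambda(x-\theta)^2/(\theta^2 x)$, the limit as $x\to 0^+$ diverges because the denominator vanishes while the numerator tends to $\lambda\theta^2>0$, and the limit as $x\to\infty$ diverges because $d_{\rm Inv}^\lambda(x,\theta)\sim\lambda x/\theta^2$. Hence both endpoint limits equal $+\infty$, so $\zeta=\infty$ and the hypothesis of the general theorem is satisfied.

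With these two identifications in hand, the bounded condition $\int_0^\zeta g(t)f'(t)dt<\infty$ specializes to $\int_0^\infty g(t)f'(t)dt<\infty$, which is precisely \eqref{eq:cbd_cond}, so the corollary follows at once from \cite[Theorem 3]{mypaper5}. I expect the only non-routine step to be confirming the mixture identification, namely that the single continuous Bregman density for $\phi(x)=\lambda/x$ splits into exactly the two GIGT components with indices $\nu=0$ and $\nu=-1$; once the partial-fraction split of $(x+\theta)/x^2$ is noticed, the remainder is bookkeeping with normalization constants.
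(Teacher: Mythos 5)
Your proposal is correct and follows exactly the paper's own route: the paper also obtains this corollary by citing \cite[Theorem 3]{mypaper5}, having already identified the continuous Bregman distribution generated by $\phi(x)=\lambda/x$ with the GIGT mixture \eqref{eq:mixture_cbd}. Your write-up merely makes explicit the two steps the paper leaves implicit — the partial-fraction identification $\frac{\phi'(x)-\phi'(\theta)}{x-\theta}=\frac{\lambda(x+\theta)}{\theta^2x^2}\propto x^{-1}+\theta x^{-2}$ yielding the $\nu=0$ and $\nu=-1$ components with the stated weight $w$, and the endpoint check that $\zeta=\lim_{x\to0^+}d_{\rm Inv}^\lambda(x,\theta)=\lim_{x\to\infty}d_{\rm Inv}^\lambda(x,\theta)=\infty$ — both of which are verified correctly.
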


\subsection{Discussion}
In this section, we showed that when using inverse divergence for estimation, two types of statistical models satisfy the unbiased estimating equation: the IGT \eqref{eq:igt} and GIGT mixture \eqref{eq:mixture_cbd} distributions, each with different conditions for the function $f$. 
Comparing conditions \eqref{eq:bregman_ig_cond} and \eqref{eq:cbd_cond}, clearly, the factor $1/\sqrt{t+1}$ applied to the integrand differs. 
This implies that with a fixed generating function $g$, the functions $f$ available for estimation differ.
When fixing $f'(t)=1$, the IGT distribution always has an expected value independent of the generating function $g$. 
In contrast, the existence of the expected value in the GIGT mixture distribution depends on the generating function $g$.
More generally, the same can be concluded  when comparing the IGT distribution with the continuous Bregman distributions, which generalize the GIGT mixture distribution.
For distributions such as continuous Bregman, elliptical, and IS distributions that are generalized by the generating function $g$, the existence of the expected value depends on the generating function $g$. 
For example, the Cauchy distribution, a particular case of elliptical distributions, does not have an expected value. 
This highlights that the IGT distribution is somewhat unique within the family of distributions generalized by the generating function $g$.
\section{Extension to multi-dimensional case}
In this section, we extend the problem to cases of multi-dimensional data points.
In the following, to simplify the notation, the $j$-th dimension variable is represented by a subscript $j$.
Let the Bregman divergence be given as a linear sum over the dimensions of the inverse divergence:
\begin{align}
    d_{\rm MInv}^{\bm{\lambda}}(\bm{x}, \bm{\theta}) \triangleq \sum_{j=1}^d  d_{\rm Inv}^{\lambda_j}(x_j, \theta_j) .\label{eq:multivariate_ig_div}
\end{align}
The corresponding strictly convex function is given by $\phi(\bm{x})=\sum_{j=1}^d \lambda_j/x_j$.
\begin{definition}[Multivariate IGT (MIGT) distribution]
    For $\bm{x}\in\mathbb{R}_{++}^d$, the parameters $\bm{\theta}\in\bm{\Theta}=\mathbb{R}_{++}^d$, $\bm{\lambda}\in\mathbb{R}_{++}^d$, and a nonnegative generating function $g:\mathbb{R}_+\to\mathbb{R}_+$, the MIGT distribution is defined as follows if the normalization constant $C_{\rm MIGT}$ exists:
    \begin{align}
        p(\bm{x}|\bm{\theta}, \bm{\lambda}) &= \frac{1}{C_{\rm MIGT}}\prod_{j=1}^d \left[\sqrt{\frac{\lambda_j}{x_j^3}}\right]g(d_{\rm MInv}^{\bm{\lambda}}(\bm{x}, \bm{\theta})), \label{eq:multi_igt} \\
        C_{\rm MIGT} &= \frac{\pi^\frac{d}{2}}{\Gamma(\frac{d}{2})}\int_0^\infty g(t)t^{\frac{d-2}{2}}dt, \label{eq:mig_const}
    \end{align}
    where $\Gamma(\cdot)$ represents the gamma function.
\end{definition}
The MIGT distribution has the same normalization constant $C_{\rm MIGT}$ as elliptical distributions \cite{elliptical_book} and extends the IGT distribution to multiple dimensions. 
Setting the generating function as $g(t)=\exp(-t/2)$,  we obtain the simultaneous distribution of $d$ independent inverse Gaussian distributions.
As shown in Corollary \ref{corollary:multi-igt_expect}, the expected value of the MIGT distribution exists independent of the generating function $g$, and $\mathbb{E}[\bm{X}]=\bm{\theta}$ holds.
\begin{assumption}\label{assp:migt}
    There exists a MIGT distribution \eqref{eq:multi_igt} corresponding to the nonnegative generating function $g$, i.e., $C_{\rm MIGT}<\infty$.
\end{assumption}
\begin{theorem} \label{theorem:multi_igt}
    Let us assume $d\geq 2$.
    Under Assumption \ref{assp:migt}, the estimating equation without a bias correction term equivalently, \eqref{eq:bias0} holds if and only if 
    \begin{align}
        \int_0^\infty\int_0^\infty g(t+s)f'(t+s)\frac{t^\frac{d-3}{2}}{\sqrt{s+1}}dtds<\infty  \label{eq:multi_bregman_ig_cond}
    \end{align}
    holds for the combination of the function $f$ and the statistical model \eqref{eq:multi_igt}.
\end{theorem}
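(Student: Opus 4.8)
The plan is to mirror the proof of Theorem~\ref{theorem:igt} coordinatewise, exploiting the additive structure of the divergence \eqref{eq:multivariate_ig_div}. By symmetry it suffices to show that the $k$-th component of the left-hand side of \eqref{eq:bias0} vanishes for each $k$. Fixing $k$, I would split the divergence as $d_{\rm MInv}^{\bm\lambda}(\bm x,\bm\theta)=t_k+s_k$, where $t_k=d_{\rm Inv}^{\lambda_k}(x_k,\theta_k)$ and $s_k=\sum_{j\neq k}d_{\rm Inv}^{\lambda_j}(x_j,\theta_j)$, and then integrate over $x_k$ first with the remaining coordinates held fixed. Since $g$ and $f'$ depend on $x_k$ only through the combination $t_k+s_k$ and $s_k$ is constant during this inner integration, the inner integral has exactly the one-dimensional form treated in Theorem~\ref{theorem:igt}.

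For that inner integral I would reuse the $2$-to-$1$ substitution $t_k=d_{\rm Inv}^{\lambda_k}(x_k,\theta_k)$ from Appendix~\ref{sec:appendix2}: the two branches $x_k<\theta_k$ and $x_k>\theta_k$ send the factor $\sqrt{\lambda_k/x_k^3}\,(x_k-\theta_k)\,dx_k$ to $\pm\frac{\theta_k}{\sqrt{t_k+a_k}}\,dt_k$ with $a_k=4\lambda_k/\theta_k$, so that the signed contributions cancel and the inner integral is $0$. Provided the full integrand is absolutely integrable, Fubini's theorem then lets me conclude that each component of $\mathbb{E}_{p(\bm x|\bm\theta,\bm\lambda)}[f'(d_{\rm MInv}^{\bm\lambda}(\bm X,\bm\theta))(\bm X-\bm\theta)]$ is zero; this gives sufficiency.

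To pin down the exact integrability condition (and hence necessity, exactly as in Theorem~\ref{theorem:igt}), I would compute $\mathbb{E}_{p(\bm x|\bm\theta,\bm\lambda)}[|f'(d_{\rm MInv}^{\bm\lambda}(\bm X,\bm\theta))(X_k-\theta_k)|]$ by applying the same substitution to every coordinate. Summing the two branches in each coordinate turns $\sqrt{\lambda_j/x_j^3}\,dx_j$ into $t_j^{-1/2}\,dt_j$ for $j\neq k$ and $\sqrt{\lambda_k/x_k^3}\,|x_k-\theta_k|\,dx_k$ into $\tfrac{2\theta_k}{\sqrt{t_k+a_k}}\,dt_k$, reducing the expectation to
\[
\frac{2\theta_k}{C_{\rm MIGT}}\int_{\mathbb{R}_{++}^d} g\Big(\textstyle\sum_j t_j\Big)f'\Big(\textstyle\sum_j t_j\Big)\frac{1}{\sqrt{t_k+a_k}}\prod_{j\neq k}\frac{1}{\sqrt{t_j}}\,d\bm t .
\]
The genuinely multi-dimensional step is to collapse the $(d-1)$ variables $\{t_j\}_{j\neq k}$, on which the integrand depends only through $u=\sum_{j\neq k}t_j$, using the Dirichlet-type identity $\int_{\mathbb{R}_{++}^{d-1}}h(\sum_{j\neq k}t_j)\prod_{j\neq k}t_j^{-1/2}\,dt=\frac{\pi^{(d-1)/2}}{\Gamma((d-1)/2)}\int_0^\infty h(u)u^{(d-3)/2}\,du$. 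This produces exactly the weight $u^{(d-3)/2}$ and leaves the double integral of \eqref{eq:multi_bregman_ig_cond} with $a_k$ in place of $1$.

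Finally, I would remove the dependence on $a_k=4\lambda_k/\theta_k$ by the same monotonicity-and-continuity argument as in Theorem~\ref{theorem:igt}: the double integral is strictly decreasing and continuous in $a_k$, so its finiteness at one value is equivalent to finiteness at $a_k=1$, giving the stated $\frac{1}{\sqrt{s+1}}$. I expect the main obstacle to be this Dirichlet reduction: justifying the collapse to the single variable $u$ with weight $u^{(d-3)/2}$ (most cleanly by induction on $d$, or via a Laplace-transform/convolution argument) while simultaneously controlling convergence at the origin, where the restriction $d\geq 2$ enters. As a consistency check, applying the very same reduction to all $d$ coordinates of the normalization integral $\int \prod_j\sqrt{\lambda_j/x_j^3}\,g(d_{\rm MInv}^{\bm\lambda}(\bm x,\bm\theta))\,d\bm x$ reproduces the constant $C_{\rm MIGT}$ in \eqref{eq:mig_const}, confirming that the substitution bookkeeping is correct.
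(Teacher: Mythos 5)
Your proposal is correct and follows essentially the same route as the paper's own proof: the coordinatewise substitution $t_k=d_{\rm Inv}^{\lambda_k}(x_k,\theta_k)$ with branch cancellation, the branch-summed reduction $\sqrt{\lambda_j/x_j^3}\,dx_j \to t_j^{-1/2}dt_j$ for $j\neq k$, the Dirichlet-type collapse (the paper's Lemma \ref{lemma:lemma2} with $m=d-1$) yielding the weight $t^{(d-3)/2}$, the elimination of $a_k=4\lambda_k/\theta_k$ by the monotonicity-and-continuity argument, and necessity via the absolute-value expectation being proportional to \eqref{eq:multi_bregman_ig_cond}. The only cosmetic differences are that you make the Fubini step and the $d\geq 2$ origin-integrability issue explicit, which the paper leaves implicit.
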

The proof of Theorem \ref{theorem:multi_igt} is provided in Appendix \ref{sec:appendix3}.
\begin{corollary}\label{corollary:multi-igt_expect}
    The expected value of a MIGT distribution \eqref{eq:multi_igt} exists, independent of the generating function $g$, and satisfies $\mathbb{E}[\bm{X}]=\bm{\theta}$.
\end{corollary}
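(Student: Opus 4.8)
The plan is to follow exactly the route used for Corollary \ref{corollary:igt_expect}, specializing Theorem \ref{theorem:multi_igt} to the linear choice $f(t)=t$, i.e.\ $f'(t)=1$ (the MLE case, which is an admissible $f$). For this choice the bias-vanishing identity \eqref{eq:bias0} reads $\mathbb{E}[\bm{X}-\bm{\theta}]=\bm{0}$, so establishing \eqref{eq:bias0} for $f'=1$ simultaneously yields existence of the mean and the value $\mathbb{E}[\bm{X}]=\bm{\theta}$. By Theorem \ref{theorem:multi_igt} it therefore suffices to check that condition \eqref{eq:multi_bregman_ig_cond} holds with $f'(t)=1$, that is, that
\[
  J \triangleq \int_0^\infty\!\!\int_0^\infty g(t+s)\frac{t^{(d-3)/2}}{\sqrt{s+1}}\,dt\,ds<\infty
\]
is automatic under Assumption \ref{assp:migt}. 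As in the one-dimensional case, where $\mathbb{E}[|f'(d_{\rm Inv}^\lambda(X,\theta))(X-\theta)|]$ was shown to be proportional to the relevant integral, the finiteness of $J$ both guarantees absolute integrability of each $X_j$ (hence existence of the mean) and, via the theorem, forces the mean to equal $\bm{\theta}$.

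To carry this out I would first collapse the double integral into a single integral against $g$. Substituting $u=t+s$ at fixed $t$ and exchanging the order of integration (Tonelli applies, the integrand being nonnegative) gives
\[
  J=\int_0^\infty g(u)\,H(u)\,du,\qquad H(u)=\int_0^u \frac{t^{(d-3)/2}}{\sqrt{u-t+1}}\,dt .
\]
The inner kernel is then bounded by a pure power of $u$: using $\sqrt{u-t+1}\ge\sqrt{u-t}$ and the Beta integral $\int_0^u t^{(d-1)/2-1}(u-t)^{1/2-1}\,dt=B\!\left(\tfrac{d-1}{2},\tfrac12\right)u^{(d-2)/2}$, which converges precisely because $d\ge2$ makes both Beta parameters positive, I obtain $H(u)\le B\!\left(\tfrac{d-1}{2},\tfrac12\right)u^{(d-2)/2}$, where $B(\cdot,\cdot)$ is the Beta function. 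Combining,
\[
  J\le B\!\left(\tfrac{d-1}{2},\tfrac12\right)\int_0^\infty g(u)u^{(d-2)/2}\,du
  = B\!\left(\tfrac{d-1}{2},\tfrac12\right)\frac{\Gamma(d/2)}{\pi^{d/2}}\,C_{\rm MIGT},
\]
which is finite by Assumption \ref{assp:migt} in view of \eqref{eq:mig_const}. Hence \eqref{eq:multi_bregman_ig_cond} holds for $f'=1$ and the conclusion follows from Theorem \ref{theorem:multi_igt}.

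I expect the only genuine obstacle to be the estimate on $H(u)$. The naive bound $1/\sqrt{s+1}\le 1$ would replace the power $u^{(d-2)/2}$ by the larger $u^{(d-1)/2}$, which no longer matches the normalization integral $\int_0^\infty g(u)u^{(d-2)/2}\,du$ defining $C_{\rm MIGT}$; the argument must therefore retain the $1/\sqrt{s+1}$ factor and exploit it through the comparison $(u-t+1)^{-1/2}\le(u-t)^{-1/2}$, which produces exactly the Beta kernel with the correct exponent of $u$. A minor point to confirm is integrability of $H(u)$ near $u=0$, but this is harmless since there $H(u)=O(u^{(d-1)/2})$, so no boundary singularity arises. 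With these pieces the corollary reduces cleanly to the already-assumed existence of $C_{\rm MIGT}$, mirroring the way Corollary \ref{corollary:igt_expect} followed from Theorem \ref{theorem:igt}.
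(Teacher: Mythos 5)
Your proof is correct and follows essentially the same route as the paper's: specialize Theorem \ref{theorem:multi_igt} to $f'(t)=1$, bound $1/\sqrt{s+1}$ by $1/\sqrt{s}$ (your $(u-t+1)^{-1/2}\le(u-t)^{-1/2}$ is the same inequality after substitution), and reduce the resulting integral to a constant multiple of $\int_0^\infty g(t)t^{(d-2)/2}dt$, which is finite by Assumption \ref{assp:migt}. The only difference is cosmetic: where the paper invokes Lemma \ref{lemma:lemma1} with $m=2$, $\alpha_1=1/2$, $\alpha_2=(d-1)/2$, you derive that same Dirichlet-type identity inline via Tonelli and the Beta integral.
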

\begin{proof}
    From Theorem \ref{theorem:multi_igt}, by setting $f'(t)=1$, we immediately obtain the following relationship:
    \begin{align}
        \int_{\mathbb{R}_+^2} g(t+s)\frac{t^{\frac{d-3}{2}}}{\sqrt{s+1}}dtds<\infty \iff \mathbb{E}_{p(\bm{x}|\bm{\theta}, \bm{\lambda})}[\bm{X}]=\bm{\theta}<\infty. \label{eq:migt_cond}
    \end{align}
    We consider the upper bound of left condition of \eqref{eq:migt_cond}:
    \begin{align}
        \int_{\mathbb{R}_+^2} g(t+s)\frac{t^\frac{d-3}{2}}{\sqrt{s+1}}dtds
         &< \int_{\mathbb{R}_+^2} g(t+s)\frac{t^\frac{d-3}{2}}{\sqrt{s}}dtds \nonumber \\
        &=\frac{\sqrt{\pi}\Gamma(\frac{d-1}{2})}{\Gamma(\frac{d}{2})}\int_0^\infty g(t)t^{\frac{d-2}{2}}dt \nonumber \\
        &\leq \frac{\pi^\frac{d}{2}}{\Gamma(\frac{d}{2})}\int_0^\infty g(t)t^{\frac{d-2}{2}}dt. \label{eq:might_upper}
    \end{align}
    The first inequality is derived from a simple comparison of the integrand. 
    The second line is obtained by setting  $m=2$, $\alpha_1=1/2$, $\alpha_2=(d-1)/2$, $u(t)=g(t)$ and applying Lemma \ref{lemma:lemma1} in Appendix \ref{sec:appendix1} to the first line. 
    Equation \eqref{eq:might_upper} is the normalization constant \eqref{eq:mig_const} of the MIGT distribution.
    Therefore, the expected value of the MIGT distribution exists independent of the generating function $g$, and $\mathbb{E}[\bm{X}]=\bm{\theta}$ holds.
    \qed
\end{proof}

\section{Conclusion}
In this paper, we discussed the conditions under which the unbiased estimating equation holds without the bias correction term for loss functions composed of a monotonically increasing function $f$ and Bregman divergence. 
In the case of inverse divergence, this scenario was satisfied by the IGT and GIGT mixture distributions (the latter is a particular case of the continuous Bregman distribution), each with different conditions of the function $f$.
In estimating the IGT distribution based on robust divergence, the bias correction terms, which are analytically intractable integrals, do not vanish. 
By defining the Bregman divergence as a linear sum over the dimensions of inverse divergence, we extended the discussion to multi-dimensional cases. 
The corresponding statistical model in this case (the MIGT distribution) was newly defined in this paper. 
Similar to the IGT distribution, we clarified that if a probability density function exists, an expected value, satisfying $\mathbb{E}[\bm{X}]=\bm{\theta}$  exists as well. 
Furthermore, we showed that the conditions for the function $f$ that can be used for estimation are provided by a double integral.

When the statistical model is the regular exponential family, the Bregman divergences ensuring the unbiased estimating equation are the squared and IS distances reported in previous studies and the inverse divergence reported in this study,  which are special cases of $\beta$-divergence \cite{Cichocki2010}. 
Although this paper did not delve into the details of robust estimation, by combining discussions from existing research, it was proven that latent biases can be reduced to approximately zero, even in the presence of outliers, and that the consistency of the estimator can be established \cite{mypaper5}.

\small
\section*{Acknowledgment}
Except the abstract, this paper was translated from Japanese into English by OpenAI ChatGPT-4.
We would like to thank Editage and Enago for English language editing.
This work was supported in part by JSPS KAKENHI Grant numbers JP23K16849 and JP19K11825.
\normalsize

\enlargethispage{-1.2cm} 


\bibliographystyle{IEEEtran}
\bibliography{Refj}

\clearpage

\appendices
\section{Detailed proof of Theorem \ref{theorem:igt}}\label{sec:appendix2}
Here, we set
\begin{align}
t = d_{\rm Inv}^\lambda(x, \theta) \label{eq:t_dig}
\end{align}
and perform substitution integration, which requires that the integration interval for $x$ be divided.
The integration intervals for $x$ from $0$ to $\theta$ and from $\theta$ to $\infty$ are transformed into the integration intervals for $t$ from $\infty$ to $0$ and from $0$ to $\infty$, respectively.
The factor of the integral transform is given by
\begin{align}
        dx = \frac{\lambda^{-1}\theta^2x^2}{(x-\theta)(x+\theta)}dt. \label{eq:jacobian}
\end{align}
In addition, expanding \eqref{eq:t_dig} with respect to $x$, we obtain
\begin{align}
    \lambda x^2-\theta(\theta t + 2\lambda)x+\lambda\theta^2=0 .\nonumber
\end{align}
From the quadratic formula, we obtain
\begin{align}
    x = \frac{\theta}{2\lambda}\left((\theta t+2\lambda)\pm \sqrt{\theta t(\theta t + 4\lambda)}\right). \nonumber
\end{align}
For any $\theta$, let us denote the range of $x_{<\theta}$ as $(0,\theta)$ and the range of $x_{\geq \theta}$ as $[\theta, \infty)$.
Then, $x_{<\theta}$ and $x_{\geq \theta}$ can be expressed as the inverse function of $t$ in \eqref{eq:t_dig} as follows:
\begin{align}
    x_{<\theta}(t) = \frac{\theta}{2\lambda}\left((\theta t+2\lambda)- \sqrt{\theta t(\theta t + 4\lambda)}\right), \label{eq:x_geq_theta}\\
    x_{\geq\theta}(t) = \frac{\theta}{2\lambda}\left((\theta t+2\lambda)+ \sqrt{\theta t(\theta t + 4\lambda)}\right) .\label{eq:x_leq_theta}
\end{align}
Although these are functions of $t$, in the following discussion, $t$ may occasionally be omitted for simplicity.
Let us define the following function:
\begin{align}
    h(x) = \frac{\sqrt{x}}{x+\theta}. \nonumber
\end{align}
Substituting \eqref{eq:x_geq_theta} and \eqref{eq:x_leq_theta} into this function, we obtain
\begin{align}
    h(x_{<\theta}(t)) = h(x_{\geq\theta}(t)) = \frac{\sqrt{\lambda}}{\theta}\frac{1}{\sqrt{t+4\frac{\lambda}{\theta}}} .\label{eq:func_h}
\end{align}
Using the previous equations \eqref{eq:t_dig}, \eqref{eq:jacobian}, and \eqref{eq:func_h}, the substitution integral of \eqref{eq:proof_ig1} is calculated as
\begin{align}
    & \int_\infty^0 \frac{1}{\sqrt{x_{<\theta}^3}} g(t)f'(t)(x_{<\theta}-\theta)\frac{\lambda^{-1}\theta^2x_{<\theta}^2}{(x_{<\theta}-\theta)(x_{<\theta}+\theta)}dt \nonumber \\
    &+\int_0^\infty \frac{1}{\sqrt{x_{\geq\theta}^3}} g(t)f'(t)(x_{\geq\theta}-\theta)\frac{\lambda^{-1}\theta^2x_{\geq\theta}^2}{(x_{\geq\theta}-\theta)(x_{\geq\theta}+\theta)}dt \nonumber \\
    =&\frac{\theta^2}{\lambda} \int_0^\infty g(t)f'(t) \left[h(x_{\geq\theta}(t))-h(x_{<\theta}(t)) \right]dt \nonumber \\
    =&\frac{\theta}{\sqrt{\lambda}} \int_0^\infty g(t)f'(t) \left[\frac{1}{\sqrt{t+4\frac{\lambda}{\theta}}}-\frac{1}{\sqrt{t+4\frac{\lambda}{\theta}}}\right]dt = 0. \nonumber
\end{align}

\section{Proof of Theorem \ref{theorem:multi_igt}} \label{sec:appendix3}
    Substituting the multivariate inverse divergence \eqref{eq:multivariate_ig_div} and MIGT distribution \eqref{eq:multi_igt} into the left-hand side of \eqref{eq:bias0}, we have
    \small
        \begin{align}
            &\mathbb{E}_{p(\bm{x}|\bm{\theta}, \bm{\lambda})}\left[f'(d_{\rm MInv}^{\bm{\lambda}}(\bm{X}, \bm{\theta}))(\bm{X}-\bm{\theta})\right] \nonumber \\
            \propto&\int_{\mathbb{R}_+^d} \prod_{j=1}^d \left[\frac{1}{\sqrt{x_j^3}}\right]g(d_{\rm MInv}^{\bm{\lambda}}(\bm{x}, \bm{\theta}))f'(d_{\rm MInv}^{\bm{\lambda}}(\bm{x}, \bm{\theta}))(\bm{x}-\bm{\theta})d\bm{x}. \nonumber
        \end{align}
    \normalsize
    In the following, we set $\bar{g}(t)=g(t)f'(t)$ and focus our discussion on the $k$-th dimension:
    \small
    \begin{align}
        &\int_{\mathbb{R}_+^d} \prod_{j=1}^d \left[\frac{1}{\sqrt{x_j^3}}\right] \bar{g}(d_{\rm MInv}^{\bm{\lambda}}(\bm{x}, \bm{\theta}))(x_k-\theta_k)d\bm{x} \nonumber \\
        =&\int_{\mathbb{R}_+^{d-1}}\int_0^{\theta_k} \frac{1}{\sqrt{x_k^3}} \bar{g}(d_{\rm MInv}^{\bm{\lambda}}(\bm{x}, \bm{\theta}))(x_k-\theta_k)dx_k \prod_{j\neq k}^d \frac{1}{\sqrt{x_j^3}}dx_j  \nonumber \\
        &+\int_{\mathbb{R}_+^{d-1}}\int_{\theta_k}^\infty \frac{1}{\sqrt{x_k^3}} \bar{g}(d_{\rm MInv}^{\bm{\lambda}}(\bm{x}, \bm{\theta}))(x_k-\theta_k)dx_k \prod_{j\neq k}^d \frac{1}{\sqrt{x_j^3}}dx_j \nonumber \\
        =&\frac{\theta_k}{\sqrt{\lambda_k}}\int_{\mathbb{R}_+^{d}} \bar{g}\left(t_k+\sum_{j\neq k}^{d}  d_{\rm Inv}^{\lambda_j}(x_j, \theta_j)\right) \nonumber \\
        &\cdot\left[\frac{1}{\sqrt{t_k+4\frac{\lambda_k}{\theta_k}}}-\frac{1}{\sqrt{t_k+4\frac{\lambda_k}{\theta_k}}}\right]dt_k \prod_{j\neq k}^d \frac{1}{\sqrt{x_j^3}}dx_j = 0.\label{eq:var1}
    \end{align}
    \normalsize
    The transformation from the first equality to the second involves setting $t_k= d_{\rm Inv}^{\lambda_k}(x_k, \theta_k)$ and applying substitution integration, similar to the proof in Theorem \ref{theorem:igt}.
    For \eqref{eq:var1} to be zero, the following expression must be bounded:
    \small
    \begin{align}
        \int_{\mathbb{R}_+^{d}} \bar{g}\left(t_k+\sum_{j\neq k}^{d} d_{\rm Inv}^{\lambda_j}(x_j, \theta_j)\right)\frac{1}{\sqrt{t_k+4\frac{\lambda_k}{\theta_k}}}dt_k \prod_{j\neq k}^d \frac{1}{\sqrt{x_j^3}}dx_j . \nonumber
    \end{align}
    \normalsize
    Iteratively integrating the above expression, we obtain
    \small
        \begin{align}
            \prod_{j\neq k}^d\left[\frac{1}{\sqrt{\lambda_j}}\right]\int_{\mathbb{R}_+^d}\bar{g}\left(\sum_{j=1}^d t_j\right)\frac{1}{\sqrt{t_k+4\frac{\lambda_k}{\theta_k}}}dt_k \prod_{j\neq k}^d \frac{1}{\sqrt{t_j}}dt_j \nonumber \\
            =\prod_{j\neq k}^d\left[\frac{1}{\sqrt{\lambda_j}}\right]\frac{\pi^{\frac{d-1}{2}}}{\Gamma(\frac{d-1}{2})}\int_0^\infty \int_0^\infty \bar{g}(t+t_k)\frac{t^{\frac{d-3}{2}}}{\sqrt{t_k+4\frac{\lambda_k}{\theta_k}}}dtdt_k . \nonumber
        \end{align}
    \normalsize
    The transformation of the equation involves setting $m=d-1$ and $u(t)=\bar{g}(t+t_k)$, and using Lemma \ref{lemma:lemma2} in Appendix \ref{sec:appendix1}.
    For the same reasons as in the proof of Theorem \ref{theorem:igt}, the parameters $\lambda_k$ and $\theta_k$ can be eliminated from the integrand.
    Note that although the proof focuses on the $k$-th dimension, the bounded condition of the integral that must to be satisfied in each dimension is the same and independent of the dimension.

    Therefore, if \eqref{eq:multi_bregman_ig_cond} exists, the unbiased estimating equation \eqref{eq:bias0} holds without a bias correction term.
    Conversely, the above discussion also shows that $\mathbb{E}_{p(\bm{x}|\bm{\theta}, \bm{\lambda})}\left[\left|f'(d_{\rm MInv}^{\bm{\lambda}}(\bm{X},\bm{\theta}))(X_k-\theta_k)\right|\right]\propto 2\times\eqref{eq:multi_bregman_ig_cond}$.
    In other words, \eqref{eq:multi_bregman_ig_cond} is also a necessary condition.
    \qed

\section{Lemmas}\label{sec:appendix1}
\begin{lemma}\label{lemma:lemma1}
Given a positive integer $m\geq1$ and $\alpha_j>0$, where $(j=1,\cdots,m)$, for a nonnegative function $u$, the following relation \cite[pp.21--23]{elliptical_book} holds:
\begin{align}
    &\int_{\mathbb{R}_+^{m}} u\left(\sum_{j=1}^{m} t_j\right) \prod_{j=1}^{m} t_j^{\alpha_j-1}dt_j \nonumber \\
    &=\frac{\prod_{j=1}^{m}\Gamma(\alpha_j)}{\Gamma(\sum_{j=1}^{m}\alpha_j)} \int_0^\infty u(t) t^{\sum_{j=1}^{m}\alpha_j-1}dt. \nonumber
\end{align}
\end{lemma}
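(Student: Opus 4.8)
The plan is to prove this by induction on $m$, reducing the dimension two variables at a time through the Beta--Gamma identity. The base case $m=1$ is immediate: both sides reduce to $\int_0^\infty u(t)\,t^{\alpha_1-1}dt$, since $\Gamma(\alpha_1)/\Gamma(\alpha_1)=1$.

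The heart of the argument is a two-variable collapse lemma. First I would establish that for any nonnegative measurable $v$ and any $a,b>0$,
\begin{align}
    \int_0^\infty\!\!\int_0^\infty v(s+t)\,s^{a-1}t^{b-1}\,ds\,dt = \frac{\Gamma(a)\Gamma(b)}{\Gamma(a+b)}\int_0^\infty v(\tau)\,\tau^{a+b-1}d\tau. \nonumber
\end{align}
This follows from the change of variables $s=\tau r,\ t=\tau(1-r)$ with $\tau>0$ and $r\in(0,1)$, whose Jacobian contributes a factor $\tau$. The integrand then separates, so the double integral factors into $\int_0^\infty v(\tau)\tau^{a+b-1}d\tau$ multiplied by the Beta integral $\int_0^1 r^{a-1}(1-r)^{b-1}dr=B(a,b)=\Gamma(a)\Gamma(b)/\Gamma(a+b)$.

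For the inductive step, assuming the claim for $m-1$, I would freeze $t_1,\dots,t_{m-2}$ and apply the collapse lemma to the innermost pair $(t_{m-1},t_m)$ with $v(\tau)=u(t_1+\cdots+t_{m-2}+\tau)$, $a=\alpha_{m-1}$, and $b=\alpha_m$. This merges the last two variables into a single variable carrying exponent $\alpha_{m-1}+\alpha_m-1$ and extracts the factor $\Gamma(\alpha_{m-1})\Gamma(\alpha_m)/\Gamma(\alpha_{m-1}+\alpha_m)$, leaving an $(m-1)$-fold integral of exactly the stated shape. Invoking the induction hypothesis with exponents $\alpha_1,\dots,\alpha_{m-2},\alpha_{m-1}+\alpha_m$, the Gamma prefactors telescope to $\prod_{j=1}^m\Gamma(\alpha_j)\big/\Gamma\!\big(\sum_{j=1}^m\alpha_j\big)$ and the final single-variable exponent collapses to $\sum_{j=1}^m\alpha_j-1$, which closes the induction.

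The one point requiring genuine care is the repeated interchange of the order of integration. Because $u$ and all the power weights are nonnegative, \emph{Tonelli's theorem} licenses every such interchange and the change of variables with no integrability hypothesis, so the identity holds as an equality in $[0,\infty]$ regardless of whether the common value is finite. This is all that is needed here, since the lemma is invoked only to bound or rewrite nonnegative integrals. As an alternative one-shot route, the global substitution $t_j=s\,w_j$ with $s=\sum_j t_j$ and $w$ ranging over the unit simplex $\Delta$ has Jacobian $s^{m-1}$ and separates the integral into $\int_0^\infty u(s)\,s^{\sum_j\alpha_j-1}ds$ times the Dirichlet simplex integral $\int_\Delta\prod_j w_j^{\alpha_j-1}\,dw=\prod_j\Gamma(\alpha_j)\big/\Gamma\!\big(\sum_j\alpha_j\big)$; since that simplex constant is itself most naturally evaluated by the same Beta--Gamma induction, the two approaches share the same essential computation.
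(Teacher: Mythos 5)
Your proof is correct, and it is worth noting that the paper itself offers no proof of this lemma at all: it is quoted verbatim from the reference on elliptical distributions (cited as pp.~21--23 of that book), so your argument supplies a derivation the paper leaves entirely to the literature. Your route --- induction on $m$ with the two-variable Beta--Gamma collapse via $s=\tau r$, $t=\tau(1-r)$ --- is the standard proof of this Liouville--Dirichlet integral formula, and the telescoping of the Gamma prefactors in the inductive step checks out exactly as you describe; the one-shot simplex substitution you sketch at the end is the form the cited book actually uses. Your emphasis on Tonelli's theorem is a genuinely valuable addition in this context: the paper applies the lemma both to rewrite integrals whose finiteness is guaranteed by assumption (the normalization constant in Corollary~\ref{corollary:multi-igt_expect}) and, implicitly, in necessity arguments where one must know that the identity holds as an equality in $[0,\infty]$ even when both sides diverge (so that divergence of the single-variable integral certifies divergence of the multiple integral). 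Since $u$ and the power weights are nonnegative, your observation that the identity needs no integrability hypothesis is precisely what makes those uses legitimate.
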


In Lemma \ref{lemma:lemma1}, by setting $\alpha_j=1/2, (j=1,\cdots,m)$, we obtain the following lemma.
\begin{lemma}\label{lemma:lemma2}
For a positive integer $m\geq1$ and a nonnegative function $u$, the following relation holds:
\begin{align}
    &\int_{\mathbb{R}_+^{m}} u\left(\sum_{j=1}^{m} t_j\right) \prod_{j=1}^{m} \frac{1}{\sqrt{t_j}}dt_j \nonumber \\
    &=\frac{\pi^{\frac{m}{2}}}{\Gamma(\frac{m}{2})} \int_0^\infty u(t) t^{\frac{m}{2}-1}dt. \nonumber
\end{align}
\end{lemma}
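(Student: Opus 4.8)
The plan is to mirror the one-dimensional argument of Theorem~\ref{theorem:igt} (together with its Appendix~\ref{sec:appendix2}), treating the vector identity \eqref{eq:bias0} componentwise and reducing the surviving $(d-1)$ integrals with the Dirichlet-type identity of Lemma~\ref{lemma:lemma2}. First I would substitute the multivariate inverse divergence \eqref{eq:multivariate_ig_div} and the MIGT density \eqref{eq:multi_igt} into the left-hand side of \eqref{eq:bias0}; because $gf'$ couples the coordinates only through the single argument $d_{\rm MInv}^{\bm\lambda}(\bm x,\bm\theta)=\sum_j d_{\rm Inv}^{\lambda_j}(x_j,\theta_j)$, while the prefactor $\prod_j x_j^{-3/2}$ is a product over the coordinates, the $k$-th component of the expectation can be organized as an inner integral over $x_k$ nested inside an outer integral over the remaining coordinates $\{x_j\}_{j\neq k}$.

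For the inner integral I would split the $x_k$-range at $\theta_k$ and substitute $t_k=d_{\rm Inv}^{\lambda_k}(x_k,\theta_k)$ on each branch, reusing verbatim the inverse branches $x_{<\theta},x_{\geq\theta}$, the Jacobian, and the auxiliary function $h$ from Appendix~\ref{sec:appendix2}. Exactly as there, the two branches produce the same value of $h$, namely $\sqrt{\lambda_k}\,\theta_k^{-1}(t_k+4\lambda_k/\theta_k)^{-1/2}$, and therefore cancel, so the $k$-th component of the expectation is \emph{identically zero} for every $f$ and $g$, provided the underlying integral is absolutely convergent. Hence the real content of the theorem is the absolute-integrability condition, not the vanishing itself.

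To obtain that condition I would carry out the same cancellation but on absolute values, leaving a single copy of $\bar g(t_k+\sum_{j\neq k}d_{\rm Inv}^{\lambda_j}(x_j,\theta_j))\,(t_k+4\lambda_k/\theta_k)^{-1/2}$, where $\bar g=gf'$ as in Appendix~\ref{sec:appendix3}, integrated over $t_k$ and over the $d-1$ remaining coordinates. Substituting $t_j=d_{\rm Inv}^{\lambda_j}(x_j,\theta_j)$ for each $j\neq k$ turns $x_j^{-3/2}\,dx_j$ into $\lambda_j^{-1/2}t_j^{-1/2}\,dt_j$, the same one-dimensional measure transform used in Theorem~\ref{theorem:igt}, after which Lemma~\ref{lemma:lemma2} with $m=d-1$ collapses $\sum_{j\neq k}t_j$ into a single radial variable $t$ carrying the weight $t^{(d-1)/2-1}=t^{(d-3)/2}$. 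This is precisely where the hypothesis $d\geq2$ enters, so that $m=d-1\geq1$ and the lemma applies; the outcome is, up to a positive constant, the double integral $\int_0^\infty\!\int_0^\infty \bar g(t+s)\,t^{(d-3)/2}(s+4\lambda_k/\theta_k)^{-1/2}\,dt\,ds$.

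Finally I would eliminate the dependence on $a:=4\lambda_k/\theta_k$ by the monotonicity-and-continuity argument already established in Theorem~\ref{theorem:igt}: the integrand is strictly decreasing and continuous in $a$, so finiteness at one value of $a$ forces finiteness at every $a\in(0,\infty)$, allowing me to set $a=1$ and recover \eqref{eq:multi_bregman_ig_cond}. Since the same integral controls $\mathbb E[\,|f'(d_{\rm MInv}^{\bm\lambda}(\bm X,\bm\theta))(X_k-\theta_k)|\,]$ for each $k$ and the bound is independent of the chosen coordinate, this condition is simultaneously sufficient, making every component vanish, and necessary, since its divergence makes the expectation ill-defined. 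The main obstacle I anticipate is bookkeeping: correctly tracking the measure under the simultaneous coordinate changes and verifying that, once $t_k$ is fixed, the integrand is genuinely a function of $\sum_{j\neq k}t_j$, so that Lemma~\ref{lemma:lemma2} is legitimately applicable.
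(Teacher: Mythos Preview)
Your proposal does not prove Lemma~\ref{lemma:lemma2} at all; it is a proof plan for Theorem~\ref{theorem:multi_igt}. You even \emph{invoke} Lemma~\ref{lemma:lemma2} midway through (``Lemma~\ref{lemma:lemma2} with $m=d-1$ collapses $\sum_{j\neq k}t_j$ into a single radial variable''), so the argument is circular with respect to the statement you were asked to establish. Nothing in your write-up addresses why
\[
\int_{\mathbb{R}_+^{m}} u\!\left(\sum_{j=1}^{m} t_j\right)\prod_{j=1}^{m} t_j^{-1/2}\,dt_j
=\frac{\pi^{m/2}}{\Gamma(m/2)}\int_0^\infty u(t)\,t^{m/2-1}\,dt.
\]

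The paper's proof of Lemma~\ref{lemma:lemma2} is a one-line specialization of Lemma~\ref{lemma:lemma1}: set $\alpha_j=\tfrac12$ for every $j$, so that $\prod_j\Gamma(\alpha_j)=\Gamma(\tfrac12)^m=\pi^{m/2}$ and $\sum_j\alpha_j=m/2$. That is all that is required here. Your outline is, in fact, a reasonable sketch of the paper's Appendix~\ref{sec:appendix3} proof of Theorem~\ref{theorem:multi_igt}, but it is attached to the wrong statement.
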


\end{document}